\pgfplotsset{compat=newest}
\newcommand{\Transp}{\mathsf{T}}
\def\d{\mathrm{d}}
\newcommand{\ie}{i.e.\ }
\newcommand{\eg}{e.g.\ }
\newlength\figureheight 
\newlength\figurewidth 
\newcommand{\mytilde}{\raise.17ex\hbox{$\scriptstyle\mathtt{\sim}$}}
\theoremstyle{plain}
\newtheorem{theorem}{Theorem}
\newtheorem{remark}{Remark}
\newtheorem{definition}{Definition}
\providecommand{\keywords}[1]{\textbf{\textit{Keywords: }} #1}
\begin{document}

\newcommand{\coverTitle}{Observability of the relative motion from inertial data \\ in kinematic chains}
\newcommand{\coverYear}{2022}

\newcommand{\coverAuthors}{Manon~Kok$^{\star}$, Karsten~Eckhoff$^{\star \star}$, Ive Weygers$^{\dagger}$ and Thomas Seel$^{\star \star}$ \\ \vspace{3mm}
\small{$^\star$Delft Center for Systems and Control, Delft University of Technology, the Netherlands. \\E-mail: m.kok-1@tudelft.nl} \\
\small{$^{\star \star}$Control Systems Group, Technische Universit\"at Berlin, Germany. \\ Emails: \{karsten.eckhoff@campus.tu-berlin.de, thomas.seel@tu-berlin.de\}} \\
\small{$^\dagger$Department of Rehabilitation Sciences, KU Leuven Campus Bruges, Belgium. \\E-mail: ive.weygers@kuleuven.be} 
}

\begin{titlepage}
\begin{center}
%

\vspace*{2.5cm}
%
{\Huge \bfseries \coverTitle  \\[0.4cm]}

%
{\Large \coverAuthors \\[1.5cm]}

\renewcommand\labelitemi{\color{red}\large$\bullet$}
\begin{itemize}
\item {\Large \textbf{Please cite this version:}} \\[0.4cm]
\normalsize
Manon Kok, Karsten Eckhoff, Ive Weygers and Thomas Seel, ``Observability of the relative motion from inertial data in kinematic chains", Control Engineering Practice, Volume 125, 105206, August 2022. 

DOI: https://doi.org/10.1016/j.conengprac.2022.105206
\end{itemize}

\end{center}

\vspace{1cm}

\begin{abstract}
    Real-time motion tracking of kinematic chains is a key prerequisite in the control of, e.g., robotic actuators and autonomous vehicles and also has numerous biomechanical applications. In recent years, it has been shown that, by placing inertial sensors on segments that are connected by rotational joints, the motion of that kinematic chain can be tracked accurately. These methods specifically avoid using magnetometer measurements, which are known to be unreliable since the magnetic field at the different sensor locations is typically different. They rely on the assumption that the motion of the kinematic chain is sufficiently rich to assure observability of the relative pose. However, a formal investigation of this crucial requirement has not yet been presented, and no specific conditions for observability have so far been given. In this work, we present an observability analysis and show that the relative pose of the body segments is indeed observable under a very mild condition on the motion. We support our results by simulation studies, in which we employ a state estimator that neither uses magnetometer measurements nor additional sensors and does not impose assumptions on the accelerometer to measure only the direction of gravity, nor on the range of motion or degrees of freedom of the joints. We investigate the effect of the amount of excitation and of stationary periods in the data on the accuracy of the estimates. We then use experimental data from two mechanical joints as well as from a human gait experiment to validate the observability criterion in practice and to show that small excitation levels are sufficient for obtaining accurate estimates even in the presence of time periods during which the motion is not observable.
\end{abstract}

\keywords{Observability, inertial sensors, motion estimation, kinematic chains.}

\vfill

\end{titlepage}

\title{Observability of the relative motion from inertial data in kinematic chains}

\author{Manon~Kok$^{\star}$, Karsten~Eckhoff$^{\star \star}$, Ive Weygers$^{\dagger}$ and Thomas Seel$^{\star \star}$ \\ \vspace{3mm}
\small{$^\star$Delft Center for Systems and Control, Delft University of Technology, the Netherlands. E-mail: m.kok-1@tudelft.nl} \\
\small{$^{\star \star}$Control Systems Group, Technische Universit\"at Berlin, Germany. Emails: \\ \{karsten.eckhoff@campus.tu-berlin.de, thomas.seel@tu-berlin.de\}} \\
\small{$^\dagger$Department of Rehabilitation Sciences, KU Leuven Campus Bruges, Belgium. E-mail: ive.weygers@kuleuven.be} 
}
\date{\empty}

\section{Introduction}
\label{sec:introduction}
In recent years, inertial measurement units (IMUs) have been used for motion tracking and control in an increasing number of mechatronic and biomechanical applications ranging from autonomous cars, miniature aerial vehicles and offshore vessels \citep{RodrigoMarco:2020,Hoffmann:2010,Wan:2018,bryneRFJ:2018} to human motion capture and feedback-controlled biomedical devices \citep{Zihajehzadeh:2015,Seel:2015}. These sensors typically contain three-dimensional gyroscopes, accelerometers and magnetometers. Accelerometers and gyroscopes, measuring the specific force and the angular velocity, respectively, are also called inertial sensors.

Our interest lies in estimation of the motion of kinematic chains consisting of segments that are connected by rotational joints, where each of the segments is equipped with an IMU. These could for example be human body segments, multilink aerial vehicles or robotic actuators, as illustrated in Figure~\ref{fig:figure1}. One obvious but restrictive way to do this is to estimate the orientation of each IMU individually. A major limitation of this approach is that it relies on the assumptions that the magnetometer approximately measures a constant local magnetic field and that the accelerometer approximately measures the gravity~\citep{kokHS:2017}. Both assumptions are often violated in practice due to the presence of ferromagnetic material or electronic devices \citep{devries2009magnetic,shu2015magicol} or due to fast motion~\citep{benallegueBC:2017}. In practice, this leads to unpredictably large estimation errors and to instability and failure of control systems that rely on these estimates. One way to overcome these limitations is to use additional sensors, see e.g.\ \citep{RodrigoMarco:2020,vigneKMMP:2018}. On the other hand, in recent years, magnetometer-free approaches have been developed for accurate and reliable estimation of the complete relative pose of all segments of a kinematic chain from only inertial measurements. Our interest lies in the latter approach.

\begin{figure}[t]
\centering
\includegraphics[width=0.6\columnwidth]{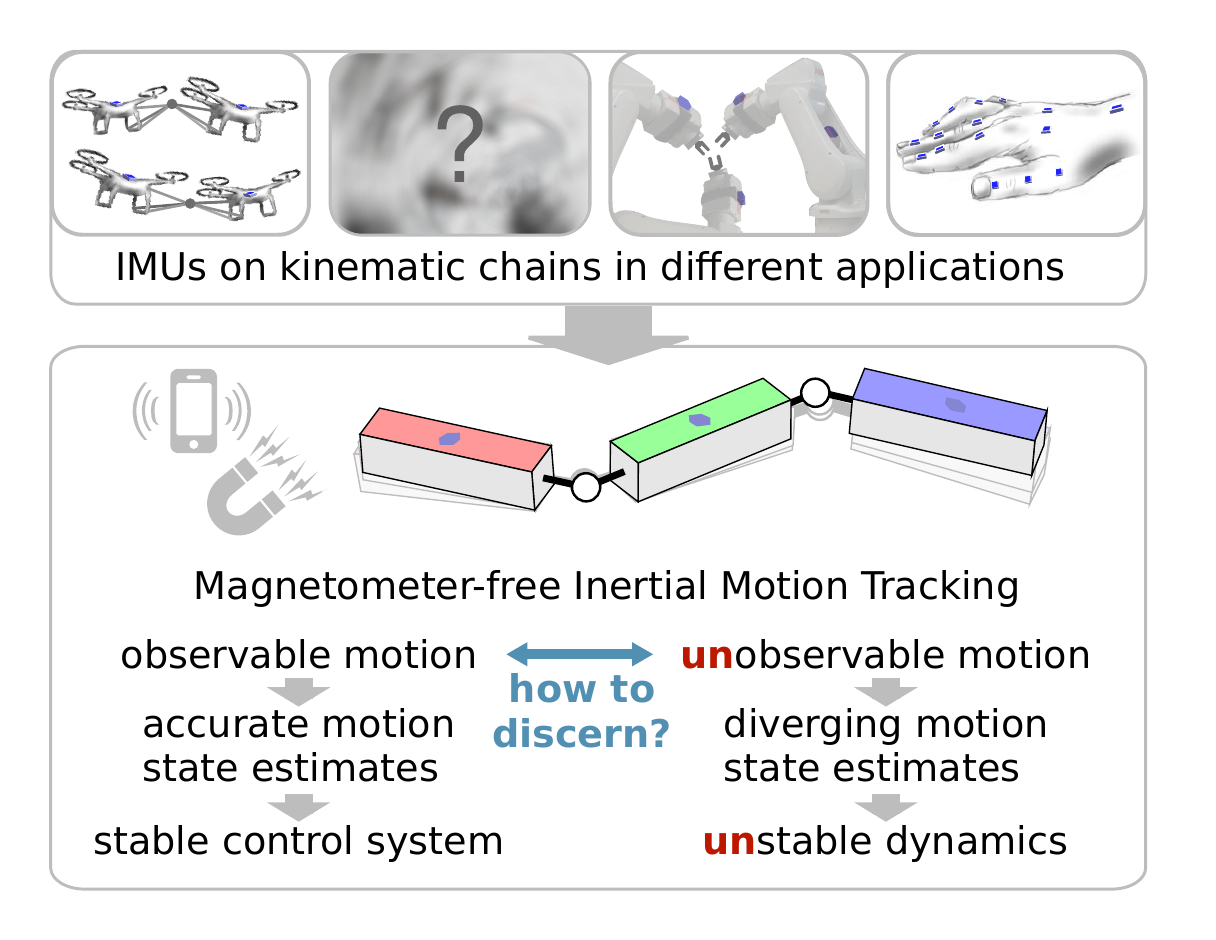}
\caption{Illustration of the main research question. Magnetometer-free inertial motion tracking of kinematic chains is desirable in many applications, including linked aerial vehicles, robotics, and human motion tracking. However, there are to date no sufficient conditions for observability of the performed motion, which puts accuracy and stability at~risk.}
\label{fig:figure1}
\end{figure}

Previous work has shown that the relative pose of the body segments can be determined entirely from inertial measurements -- without the assumption that the accelerometer only measures the gravity -- by taking the connection between the segments and the corresponding kinematic constraints into account~\citep{kokHS:2014,hol:2011,weygersKVVVHC:2020}, however only under the assumption that the motion is ``sufficiently rich''. Different types of state estimators have been used, including extended Kalman filters and optimisation-based methods. These state estimators have experimentally been shown to result in accurate estimation results with joint angle estimation errors in the range of a few degrees for biomechanical systems~\citep{weygersKVVVHC:2020}. The argument that the performed motions must be sufficiently rich to render the relative orientation observable is found throughout the mentioned literature, and it is often claimed that this is a mild assumption. However, precise statements on necessary or sufficient conditions have to date only been found for the special case of a double-hinge joint system~\citep{Eckhoff2020_IFAC}. From a practical point of view, this represents a severe limitation of magnetometer-free inertial motion tracking, since the methods must be used without knowing whether the results are accurate for the specific motion or not. 

In the present work we derive, for the first time, sufficient conditions for observability of the relative pose of the kinematic chain. This lays the foundation for crucial performance guarantees in a large range of applications. It can also be used to instruct users to perform certain movements to guarantee this performance. Since we consider arbitrary rotational joints, i.e.\ without restrictions on the range of motion, the derived sufficient conditions do not only hold in 3D joints but straightforwardly also in joints with only one or two rotation axes. 

The question under which conditions the relative motion states of a kinematic chain can be determined from magnetometer-free IMU readings and a kinematic constraint leads to observability analysis of a nonlinear dynamic system. It is known from systems and control theory that this is a non-trivial problem and that no general closed-form condition for observability of nonlinear systems exists. However, some theoretical results exist~\citep{besancon:2007,besancon:2016,hermannK:1977}, and we leverage them to derive a practically useful condition for accurate results in magnetometer-free inertial motion analysis. More specifically, we identify for which types of motion the relative motion states of the kinematic chain are not observable. We then show that even if observable motions are interrupted by short and / or infrequent periods of non-observable motions, it is possible for a state estimator to provide accurate estimation results by exploiting the kinematic constraint. While long non-observable periods inevitably lead to increasing errors in real-time estimation, the proposed criterion can be used to detect such periods and to provide crucial information about whether the estimates are reliable as well as what type of motions should be performed to increase the estimation accuracy. The contributions of the present work are five-fold: 
\begin{enumerate}
\item We show the mathematical equivalence of the two dynamic models that are most commonly used to take kinematic constraints due to the connection between body segments into account. Moreover, we propose two reduced model representations which are also mathematically equivalent but have state vectors of a smaller size.
\item We present an observability analysis that is valid for all four state space models and derive a (mild) condition under which the relative orientation can be uniquely determined from the system outputs. 
\item We analyse the estimation accuracy for both a filtering and a smoothing approach~\citep{weygersKVVVHC:2020} as a function of the type of motion and illustrate precisely in which way a motion must be sufficiently exciting to assure convergence using Monte Carlo simulations. This includes showing that a strong excitation of the ``wrong'' kind can be useless, while a small excitation of the ``right'' kind results in small estimation errors.
\item We introduce a metric that quantifies the ``amount of observability'' -- in terms of the linear independence of two vectors -- at a specific time instant and show that this metric can indeed be used to detect periods of unobservability or poor observability and can hence be used to identify periods during which the estimation results should be taken with caution.
\item We practically validate the derived conditions for two experiments. In the first, we estimate the motion of two mechanical joint systems with different degrees of freedom based on measurements from two attached IMUs. In the second, we estimate the motion of a human leg using measurements from two IMUs, placed on the thigh and the shank.
\end{enumerate}

\section{Related work}
\subsection{Magnetometer-free inertial motion tracking of kinematic chains}
In previous work, magnetometer-free approaches have been proposed that determine the \emph{relative} orientation between segments by taking the connection between the segments and the corresponding kinematic constraints into account. The practical relevance of such methods is high, since they enable accurate motion tracking in arbitrary magnetic environments and thus in a wide field of applications. In the following, we briefly summarise existing magnetometer-free approaches for relative-motion tracking of kinematic chains, and we answer two questions: Which magnetometer-free methods have been proposed for inertial motion capture, and what is known about the conditions under which these methods are known to work or fail? To estimate the motion of a kinematic chain, the most common approach is to, as we also do in this work, use a full IMU setup, which means that an IMU is placed on each segment of the kinematic chain. Methods with sparse sensor setups have been proposed but require additional assumptions on the kinematic structure and its motion \citep{Huang:2018,Marcard:2017,Eckhoff2020_IFAC} and are outside the scope of this work. 

For joints with only one degree of freedom, i.e.\ hinge joints or revolute joints, several methods have been proposed that exploit constraints of the relative orientation between the adjacent segments \citep{cooper2009inertial,Laidig:2017}. However, the kinematic constraints become singular when the hinge joint axis is vertical, and the relative heading cannot be tracked if the systems remains at the singularity. Moreover, to apply such methods, the direction of the joint axis must be known in the sensor frames of both adjacent segments, which requires suitable sensor-to-segment calibration routines \citep{seelSR:2012, olssonKSH:2020}. For joints with two degrees of freedom, e.g.\ saddle joints or the human elbow, magnetometer-free methods have been proposed that determine the relative heading by exploiting kinematic constraints of the angular rates \citep{Laidig:2019} and the orientations \citep{luinge2007ambulatory}. As in the hinge joint case, the relative heading cannot be tracked if one of the joint axes remains vertical, and the methods require identification of both joint axes' coordinates in the corresponding sensor frame \citep{LaidigMS:2017}. Finally, for joints with up to three degrees of freedom, range-of-motion constraints can be exploited on a moving-window to track the relative heading \citep{Lehmann:2020}. This approach requires knowledge of the range of motion and persistent excitation in the sense of sufficient coverage of that range. 

The aforementioned methods have in common that they apply only to joints with limited degrees of freedom or range of motion. Besides such rotational constraints, it has also been shown that one can exploit the mere fact that the segment ends that are connected by the joint cannot move apart. This information is independent of the joint's degrees of freedom and range of motion, and it can be formalised in two different ways. In \citep{kokHS:2014,miezalTB:2016,hol:2011}, it is formulated in terms of the joint centre position, while it is written in terms of the joint centre acceleration in \citep{weygersKVVVHC:2020,leeJ:2019,Fasel18,Dorschky19}. The information can for instance be included as a constraint in an optimisation-based approach or as a measurement model in a filtering approach, \eg an extended Kalman filter. For both the formulations in terms of the joint centre position and in terms of the joint centre acceleration, it was demonstrated that exploitation of the constraint enables determining the relative orientation between the segments at least if the performed motion provides ``sufficient excitation.'' However, as elaborated above, there is no analysis or investigation on what this means and which motions render the relative motion states observable. In this work, we focus on deriving sufficient conditions for observability of the relative orientations for these arbitrary joints, i.e.\ without making assumptions on their degrees of freedom or range of motion.  

\subsection{Observability analysis for inertial motion tracking}
For any system for which one aims to build an observer, it is relevant to study the observability of the system because this gives information about whether it is indeed possible to design a stable observer for the system~\citep{besancon:2007}. For linear systems, observability can be studied by determining the rank of the well-known observability matrix~\citep{rugh:1996}. The concept of uniform complete observability is typically used to prove convergence of a Kalman filter~\citep{batistaPSO:2017}. Estimating motion using IMUs is, however, an inherently nonlinear problem~\citep{kokHS:2017}.

Nonlinear observability analysis is commonly done using Lie derivatives~\citep{besancon:2007}. This has for instance been done for vision-aided inertial navigation system~\citep{panahandehGJR:2013} and for vehicle motion estimation~\citep{marcoKS:2018}. For observability analysis for orientation estimation, \eg of kinematic chains, methods have been developed to compute these Lie derivatives on Lie groups, e.g., $\mathbb{SO}(3)$~\citep{joukovCWMPK:2019}. Alternatively, the nonlinear system can be rewritten as a linear time-varying system. This has been done for \eg inertial navigation filters~\citep{bristeauPP:2010} and for robotics applications~\citep{morin2017uniform}. In~\citep{bristeauPP:2010} it has been shown that differential observability results in uniform complete observability for linear time-variant systems and can therefore be used to prove convergence of a Kalman filter.

For nonlinear as well as for linear time-varying systems, it is known that the observability of a system may depend on the input~\citep{besancon:2007}. In the case of inertial motion tracking, observability depends on the motion of the system. In this work, we study observability for the case of magnetometer-free inertial motion tracking of a kinematic chain. We show that for this specific case, there exists an elegant way to analyse the observability of the system since it can be considered to be a special case of Wahba’s problem~\citep{wahba:1965}. For this, we neither derive Lie derivatives nor require the system to be a linear time-varying system, but instead directly study whether the relative orientation can uniquely be inferred from measurements and their derivatives, which has close connections to studying differential observability~\citep{besancon:2016,gauthierK:2001}. This allows us to systematically analyse for which motions (inputs) it is indeed possible to infer the states uniquely. We then study systematically for real-life data whether the system is indeed observable at a specific time instance.

\section{Modelling}
\label{sec:modelling}
\begin{figure}
\centering
\includegraphics[width=0.6\columnwidth]{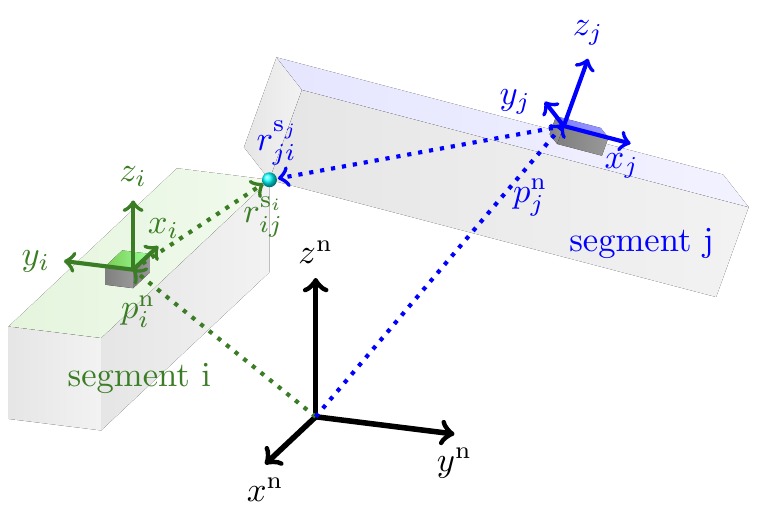}
\caption{Two adjacent segments $i$ and $j$ of a kinematic chain connected by a spherical joint.}
\label{fig:coordSys}
\end{figure}

We consider a kinematic chain with at least two rigid segments that are connected by a rotational joint, as graphically illustrated in Figure~\ref{fig:coordSys}. Our interest lies in estimating the relative pose of these connected segments, using sensors placed on each segment. In other words, our focus is on determining the relative position, velocity and orientation from the accelerometer measurements $y_{\text{a},i,t}$ and the gyroscope measurements $y_{\omega,i,t}$. Here, the subindex~$i$ explicitly indicates that these are the measurements from sensor $\text{S}_i$, where $i = 1, \hdots, N_S$. We denote the position and velocity of sensor $i$ as $p^\text{n}_i$ and $v^\text{n}_i$, respectively, where the superscript~$n$ is used to indicate that these vectors are expressed in the navigation frame $n$, which is a fixed, static coordinate frame. The origin of this frame and the direction of its axes are irrelevant in our problem formulation since we are only interested in the relative position and orientation of the sensors. The orientation of each sensor is described in terms of a rotation matrix $R^{\text{ns}_i}$, which denotes the orientation from the sensor frame $s_i$ to the static coordinate frame $n$. The origin of the frame $s_i$ lies at the centre of the accelerometer triad of sensor $i$ and its axes are aligned with the inertial sensor axes. We assume that the location and the orientation of the sensors on the body segments are known. These can for instance be obtained from pre-calibration algorithms in \citep{seelSR:2012,olssonH:2017}. Hence, estimating the relative position and orientation of the sensors becomes equivalent to estimating the relative position and orientation of the body segments. 

We assume standard measurement models for the accelerometer and gyroscope measurements
\begin{subequations}
\begin{align}
y_{\text{a},i,t} &= f_{i,t}^{\text{s}_i} + b_{\text{a},i} + e_{\text{a},i,t} 
= R^{\text{s}_i \text{n}}_t \left( a_{i,t}^\text{n} - g^\text{n} \right) + b_{\text{a},i} + e_{\text{a},i,t}, \label{eq:measModel-acc} \\
y_{\omega,i,t} &= \omega_{i,t}^{\text{s}_i} + b_{\omega,i} + e_{\omega,i,t}, \label{eq:measModel-gyr}
\end{align}
\label{eq:measModels}%
\end{subequations}
where $f_{i,t}^{\text{s}_i}$ and $\omega_{i,t}^{\text{s}_i}$, respectively, denote the specific force and the angular velocity at time $t$ of sensor $i$ expressed in sensor frame $s_i$, $a_{i,t}^\text{n}$ denotes the acceleration of sensor $i$ expressed in the navigation frame $n$ and $g^\text{n}$ denotes the Earth's gravity. Furthermore, $b_{\text{a},i}$ and $b_{\omega,i}$ denote the accelerometer and gyroscope sensor biases, respectively. These are often assumed to be constant for the duration of the data set. Finally, $e_{\text{a},i,t}$ and $e_{\omega,i,t}$ denote the accelerometer and gyroscope measurement noise, respectively, which are assumed to be white and Gaussian. 

In the remainder of this section, we will present four different continuous-time state space models that can be used to determine the relative pose of the segments. All four state space models use the inertial sensor measurements as an input to the dynamics. Hence, we express the continuous-time dynamics of the position, velocity and orientation in terms of the continuous-time specific force and angular velocity. We consider the information about the connection of the body segments as a pseudo-measurement model. The two state space models presented in Section~\ref{sec:modelling_ext} and Section~\ref{sec:modelling_red} are widely used in literature. We present two novel state space models in Section~\ref{sec:modelling_rel}. In Section~\ref{sec:mathEq} we will show that the four models are mathematically equivalent. 

\subsection{Modelling each segment's position, velocity and orientation}
\label{sec:modelling_ext}
In the model used in~\citep{kokHS:2014,miezalTB:2016}, the relative pose is estimated by parametrising the system in terms of the absolute position $p_i^\text{n}$, velocity $v_i^\text{n}$ and orientation $R_i^{\text{ns}_i}$. Assuming that the orientation is parametrised as a three-dimensional vector~\citep{kokHS:2017}, the state vector is of size $9 N_S$. Note that this is clearly an overparametrisation since the inertial sensors do not provide any information about the absolute position, velocity and heading of the body. It is, however, a very flexible model since it can straightforwardly include additional information such as GPS measurements. The dynamics is modelled as
\begin{subequations}
\begin{align}
\dot{p}_i^\text{n} &= v_i^\text{n},
\\
\dot{v}_i^\text{n} &= a_i^\text{n} = R^{\text{ns}_i} f_{i}^{\text{s}_i} + g^\text{n},
\\
\dot{R}^{\text{ns}_i} &= R^{\text{ns}_i} \left[ \omega_i^{\text{s}_i} \times \right], \label{eq:dynModel_ori}
\end{align}
\label{eq:dynModel}%
\end{subequations}
for $i = 1, \hdots , N_S$, where $[\cdot \, \times]$ denotes the matrix cross product. Note that the specific force $f_{i}$ and the angular velocity $\omega_i$ are measured by the inertial sensors as in~\eqref{eq:measModels} and are used as an input to the dynamic model, as is common in inertial sensor fusion~\citep{kokHS:2017}. The connection between the body segments is modelled as 
\begin{equation}
0 = p_i^\text{n} + R^{\text{ns}_i} r_{ij}^{\text{s}_i} - p_j^\text{n} - R^{\text{ns}_j} r_{ji}^{\text{s}_j}.
\label{eq:modelConstraint_pos}
\end{equation}
Here, $r_{ij}^{\text{s}_i}$ is the distance from sensor $S_i$ to the joint centre connecting the segments $i$ and $j$, expressed in the sensor frame $s_i$. Note the reversed subindices for $r_{ji}^{\text{s}_j}$ to denote the distance from sensor $S_j$ to the joint centre connecting the segments $i$ and $j$. Hence,~\eqref{eq:modelConstraint_pos} models the equivalence between the location of the joint centre expressed in the coordinates of sensors $i$ and $j$ and holds for any type of rotational joint. Assuming that $r_{{ij}}^{\text{s}_i}$ and $r_{ji}^{\text{s}_j}$ are known, the model can both be used as a constraint~\citep{kokHS:2014} in an optimisation problem or as a measurement model~\citep{miezalTB:2016} in \eg an extended Kalman filter implementation. 

\subsection{Modelling each segment's orientation}
\label{sec:modelling_red}
In the model used in~\citep{weygersKVVVHC:2020,leeJ:2019,Fasel18}, the relative pose is modelled only in terms of the orientation. Hence, assuming again that the orientation is parametrised as a three-dimensional vector~\citep{kokHS:2017}, the state vector is of size $3 N_S$. If the body is assumed to be rigid and the location of the sensors is known, it is possible to compute the full relative pose from the relative orientations. The dynamics is modelled by~\eqref{eq:dynModel_ori} for $i = 1, \hdots , N_S$. The connection between the body segments can again be considered to be a constraint or a measurement function and is modelled as 
\begin{align}
& 0 = R^{\text{n} \text{s}_i} \left( f_{i}^{\text{s}_i} + \left( [\omega_i^{\text{s}_i} \times]^2 + [\dot{\omega}_i^{\text{s}_i} \times] \right) r_{{ij}}^{\text{s}_i}  \right) - \nonumber \\
& \qquad \qquad R^{\text{n} \text{s}_j} \left( f_{j}^{\text{s}_j} + \left( [\omega_j^{\text{s}_j} \times]^2 + [\dot{\omega}_j^{\text{s}_j} \times] \right) r_{{ji}}^{\text{s}_j} \right),
\label{eq:modelConstraint_acc}
\end{align}
where $\dot{\omega}$ denotes the angular acceleration. This models the equivalence between the specific force or equivalently the acceleration of the joint centre expressed in the coordinates of body segments $i$ and $j$ and hence, as the model~\eqref{eq:modelConstraint_pos}, holds for any type of rotational joint. In the remainder of this paper, we will also make use of the shorthand notation
$f^{\text{s}_i}_{\text{c}_{ij},i}$ to denote the acceleration at the joint centre $c_{ij}$, measured by sensor $i$, expressed in sensor frame $s_i$, defined as
\begin{equation}
f^{\text{s}_i}_{\text{c}_{ij},i} = f_{i}^{\text{s}_i} + \left( [\omega_i^{\text{s}_i} \times]^2 + [\dot{\omega}_i^{\text{s}_i} \times] \right) r_{{ij}}^{\text{s}_i}.
\label{eq:modelConstraint_f}
\end{equation}
Using that notation,~\eqref{eq:modelConstraint_acc} can equivalently be written as 
\begin{equation}
0 = R^{\text{n} \text{s}_i} f^{\text{s}_i}_{\text{c}_{ij},i} - R^{\text{n} \text{s}_j} f^{\text{s}_j}_{\text{c}_{ij},j}.
\label{eq:modelConstraint_acc_f}
\end{equation}

\subsection{Modelling the relative pose}
\label{sec:modelling_rel}
The models from Sections~\ref{sec:modelling_ext} and~\ref{sec:modelling_red} are clearly overparametrisations of the problem. Although this is not widely used in literature, it is also possible to use a minimal representation. The benefit of overparametrisation is the flexibility of adding more information to the system. The benefit of parametrising the system with less states is that it is computationally more efficient. We define the relative position, velocity and orientation as
\begin{subequations}
\begin{align}
p_{\text{r},ij}^{\text{s}_i} &= R^{\text{s}_i \text{n}} \left( p_i^\text{n} - p_j^\text{n} \right),  \\
v_{\text{r},ij}^{\text{s}_i} &= R^{\text{s}_i \text{n}} \left( v_i^\text{n} - v_j^\text{n} \right), \\
R^{\text{s}_i \text{s}_j} &= R^{\text{s}_i \text{n}} R^{\text{n} \text{s}_j}, 
\end{align}
\end{subequations}
where $p_{\text{r},ij}^{\text{s}_i}$ and $v_{\text{r},ij}^{\text{s}_i}$ are the relative position and velocity, respectively, of segment $i$ with respect to segment $j$, expressed in segment $i$. It is now possible to write the dynamics of these relative states as
\begin{subequations}
\begin{align}
\dot{p}_{\text{r},ij}^{\text{s}_i} &= -\left[ \omega_i^{\text{s}_i} \times \right] p_{\text{r},ij}^{\text{s}_i} + v_{\text{r},ij}^{\text{s}_i}, \\
\dot{v}_{\text{r},ij}^{\text{s}_i} &= -\left[ \omega_i^{\text{s}_i} \times \right] v_{\text{r},ij}^{\text{s}_i} + f_{i}^{\text{s}_i} - R^{\text{s}_i \text{s}_j} f_{j}^{\text{s}_j}, \\
\dot{R}^{\text{s}_i \text{s}_j} &= -\left[ \omega_i^{\text{s}_i} \times \right] R^{\text{s}_i \text{s}_j} + R^{\text{s}_i \text{s}_j} \left[ \omega_j^{\text{s}_j} \times \right]. \label{eq:dynRelOri}
\end{align}
\end{subequations}
where we made use of~\eqref{eq:measModel-acc} and~\eqref{eq:dynModel}.

The model~\eqref{eq:modelConstraint_pos} can be expressed in these states as 
\begin{equation}
0 = p_{\text{r},ij}^{\text{s}_i} + r_{{ij}}^{\text{s}_i} - R^{\text{s}_i \text{s}_j} r_{{ji}}^{\text{s}_j}.
\label{eq:modelConstraint_posrel}
\end{equation}
When estimating the relative pose, the size of the state vector reduces from $9 N_S$ in Section~\ref{sec:modelling_ext} to $9 (N_S - 1)$. It is also possible to only estimate the relative orientation using~\eqref{eq:dynRelOri} and straightforwardly expressing the model~\eqref{eq:modelConstraint_acc} in terms of the relative orientation $R^{\text{s}_i \text{s}_j}$. This reduces the size of the state vector to $3 (N_S - 1)$.

\subsection{Mathematical equivalence of the four models}
\label{sec:mathEq}
Making use of the dynamic models~\eqref{eq:dynModel}, the first and second derivatives of the model~\eqref{eq:modelConstraint_pos} can be written as
\begin{subequations}
\begin{align}
&\frac{\d}{\d{t}}\left( p_i^\text{n} + R^{\text{ns}_i} r_{ij}^{\text{s}_i} - p_j^\text{n} - R^{\text{ns}_j} r_{ji}^{\text{s}_j} \right) = 
\nonumber\\
&\qquad v_i^\text{n} + R^{\text{ns}_i} \left[ \omega_i \times \right] r_{ij}^{\text{s}_i} - v_j^\text{n} - R^{\text{ns}_j} \left[ \omega_j \times \right] r_{ji}^{\text{s}_j}, \\
&\frac{\d^2}{\d t^2}\left( p_i^\text{n} + R^{\text{ns}_i} r_{ij}^{\text{s}_i} - p_j^\text{n} - R^{\text{ns}_j} r_{ji}^{\text{s}_j} \right) = 
\nonumber\\
&\qquad a_i^\text{n} + R^{\text{ns}_i} \left[ \omega_i \times \right]^2 r_{ij}^{\text{s}_i} + R^{\text{ns}_i} \left[ \dot{\omega}_i \times \right] r_{ij}^{\text{s}_i} - \nonumber \\
&\qquad \qquad a_j^\text{n} - R^{\text{ns}_j} \left[ \omega_j \times \right]^2 r_{ji}^{\text{s}_j} - R^{\text{ns}_j} \left[ \dot{\omega}_j \times \right] r_{ji}^{\text{s}_j}.
\end{align}
\label{eq:equivalence}%
\end{subequations}
It can therefore be concluded that the double derivative of the model for the joint position~\eqref{eq:modelConstraint_pos} is equal to the model for the joint acceleration~\eqref{eq:modelConstraint_acc}. In other words, these models are mathematically equivalent in the sense that an observability analysis of one of these models will straightforwardly hold for the other. Note that in practice this does not imply that the models are equivalent in every aspect. They can, for instance, behave differently in the presence of noise and, contrary to the reduced model from Section~\ref{sec:modelling_red}, the extended model from Section~\ref{sec:modelling_ext} straightforwardly opens up for including additional information such as absolute position measurements.

\section{Observability analysis}
\label{sec:observability}
In this section we will study the observability of the relative pose of the body segments using the models presented in Section~\ref{sec:modelling}. Without loss of generality we focus on a two-segment system and will use the model from Section~\ref{sec:modelling_red}. Hence, we address the question under which motions it is possible to uniquely determine the relative orientation of the two body segments using perfect (bias- and noise-free) inertial measurements and the models~\eqref{eq:dynModel_ori} and~\eqref{eq:modelConstraint_acc}. 

Let us write the model from Section~\ref{sec:modelling} as a general nonlinear state space model 
\begin{equation}
\dot{x} = f(x,u),\qquad 
y = h(x, u),
\label{eq:ssmodel}
\end{equation}
where $x \in \mathcal{M} = \mathbb{SO}^3 \times \mathbb{SO}^3$ with $x = \begin{pmatrix} x_i^\Transp & x_j^\Transp \end{pmatrix}^\Transp$ and $x_i, x_j$ denote the time-varying state vector representing the orientation of the two body segments. The (pseudo-)measurement vector $y \in \mathbb{R}^3$ is represented by~\eqref{eq:modelConstraint_acc} and models the connection between the body segments in terms of the acceleration of the joint centre. The dynamics of both segments is modelled by~\eqref{eq:dynModel_ori}. Furthermore, $u \in \mathbb{R}^{18}$ denotes the specific force, the angular velocity and the angular acceleration of each of the sensors. Note that under the assumption of bias- and noise-free measurements, the specific force and the angular velocity are directly measured by the inertial sensors. The angular acceleration can be obtained from the gyroscope measurements by numerical differentiation. 

The purpose of observability analysis is to answer the question whether the states of the model~\eqref{eq:ssmodel} can uniquely be determined using knowledge about the measurements $y$ and the inputs $u$~\citep{besancon:2007}. Assuming that the functions $f$ and $h$ in~\eqref{eq:ssmodel} are smooth functions of their arguments and that the input $u$ is smooth, we will study the observability of our system by considering $N \geq 0$ time derivatives of $y$ denoted by $y^{(N)}$. For some known input $u$, we define the mapping $\Phi_{u,N}:\,\mathcal{M}\to\mathbb{R}^{3(N+1)}$ by
\begin{equation}
\Phi_{u,N}\left(x(t)\right)=\begin{bmatrix}
y(t) \\
\dot{y}(t) \\
\vdots \\
y^{(N)}(t)
\end{bmatrix}. \label{eq:outmapp}
\end{equation}%
A system is differentially observable if, for any input $u$, there exists an $N \geq 0$ such that the mapping $\Phi_{u,N}\left(x(t)\right)$ is injective, i.e.\ there are no two points in the state space that yield the same vector of output derivatives~\citep{besancon:2016,gauthierK:2001}.

Before analysing the observability of our model~\eqref{eq:ssmodel} in detail, let us start with noticing two important properties of our model. Firstly, the observability of~\eqref{eq:ssmodel} depends on the performed motion, \ie on the inputs to the system. This can most clearly be seen by writing down the model constraint~\eqref{eq:modelConstraint_acc} and its $N$'th derivative as
\begin{subequations}
\begin{align}
y(t) &= 0 = R^{\text{n} \text{s}_i} f^{\text{s}_i}_{\text{c}_{ij},i} - R^{\text{n} \text{s}_j} f^{\text{s}_j}_{\text{c}_{ij},j}, \\
y^{(N)}(t) &= 0 = \frac{\d^{N}}{\d t^{N}}R^{\text{n} \text{s}_i} f^{\text{s}_i}_{\text{c}_{ij},i} - \frac{\d^{N}}{\d t^{N}} R^{\text{n} \text{s}_j} f^{\text{s}_j}_{\text{c}_{ij},j}, \label{eq:acc_dotacc_der}
\end{align}
\label{eq:acc_dotacc}%
\end{subequations}%
where we make use of the shorthand notation from~\eqref{eq:modelConstraint_f} and~\eqref{eq:modelConstraint_acc_f}. In the case of no movement, for instance, the time derivatives are zero and the mapping $\Phi_{u,N}$ is obviously not injective, hence the system is not observable. Because the observability depends on the inputs, we will study the observability at time $t$. 

A second important property of the system~\eqref{eq:ssmodel} is that~\eqref{eq:modelConstraint_acc} depends on both the orientation of the first and that of the second segment. The system, however, does not include any information about the absolute orientation of the sensors. It is therefore clearly not observable. Our interest, however, lies in determining the \emph{relative} orientation of the sensors. Hence, we focus only on observability of the relative orientation. In other words, we study the observability of the system~\eqref{eq:ssmodel} under the condition that one of the orientations is known or arbitrarily set to a certain value. Inspired by the concept differential observability~\citep{besancon:2016,gauthierK:2001}, we now introduce the following definition: 
\begin{definition}
The relative orientation $R^{\text{s}_i\text{s}_j}(t)$ is observable at time $t$ if, under the assumption that the orientation $R^{\text{n} \text{s}_i}(t)$ of one of the segments is known, the orientation $R^{\text{n} \text{s}_j}(t)$ of the other segment, and thus the entire state $x(t) \in \mathcal{M}$ of the system~\eqref{eq:ssmodel}, can be uniquely determined from the current input $u(t)$, the current output $y(t)$, and their time derivatives.
\label{def:relObs}
\end{definition}

Note that Definition~\ref{def:relObs} implies that it is possible to instantaneously and uniquely determine the relative orientation from the inertial measurements, the modelled connection between the body segments, and their derivatives.

Inputs for which it is not possible to uniquely determine a state $x$ from the measurements are called singular inputs~\citep{besancon:2007}. We can now analyse for which inputs the relative orientation of the system~\eqref{eq:ssmodel} is observable at time $t$ according to Definition~\ref{def:relObs}, \ie which inputs are non-singular. Note that in~\citep{bristeauPP:2010}, it has been shown that differential observability is a sufficient condition for the convergence of an observer for linear time-varying systems. However, one input being non-singular is not sufficient to reconstruct the state $x$ in the state space model~\eqref{eq:ssmodel}. A necessary condition for this is that the input is regularly persistent in the sense that there exists a $T>0$ such that, for any given time $t$, the mapping $\Phi_{u,N}\left(x(t)\right)$ is injective for at least one moment within the interval $[t,t+T]$~\citep{besancon:2016}. For the given system~\eqref{eq:ssmodel}, it can therefore be concluded that a necessary condition for a stable (nonlinear) observer design is that the relative orientation is observable in a regularly persistent manner.

We specifically focus our analysis of which inputs are non-singular only on the first order derivative in~\eqref{eq:acc_dotacc_der} since we consider the practical relevance of extending the analysis to higher orders close to zero. This brings us to the main result of our observability analysis: 

\begin{theorem} 
\label{th:obs} 
The relative orientation $R^{\text{s}_i\text{s}_j}$ of the system \eqref{eq:ssmodel} is observable according to Definition~\ref{def:relObs} for any time instant for which the specific force $a^\text{n}_{\text{c}_{ij}} - g^\text{n}$ of the joint centre is linearly independent of $\dot{a}^\text{n}_{\text{c}_{ij}}$. 
\end{theorem}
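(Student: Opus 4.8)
The plan is to reduce the question to a Wahba-type vector-matching problem: I will show that the zeroth- and first-order versions of the constraint each supply one vector correspondence for the sought relative rotation, and that two such correspondences pin down the rotation exactly under the stated independence condition. Under Definition~\ref{def:relObs} the orientation $R^{\text{n}\text{s}_i}$ is treated as known, so determining $R^{\text{n}\text{s}_j}$ is equivalent to determining $R := R^{\text{s}_i\text{s}_j}$. Multiplying the constraint~\eqref{eq:modelConstraint_acc_f} by $R^{\text{s}_i\text{n}}$ and using the shorthand~\eqref{eq:modelConstraint_f} gives $\phi_i = R\,\phi_j$, where $\phi_i := f^{\text{s}_i}_{\text{c}_{ij},i}$ and $\phi_j := f^{\text{s}_j}_{\text{c}_{ij},j}$ are both computable from the input $u$. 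This is a first correspondence $R u_1 = w_1$ with $u_1 = \phi_j$, $w_1 = \phi_i$. A single correspondence fixes $R$ only up to a rotation about the axis $w_1$, so a second, independent correspondence is required.

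First I would differentiate the zeroth-order constraint $\phi_i = R\phi_j$. Using the relative-orientation dynamics~\eqref{eq:dynRelOri} for $\dot R$ and substituting $R\phi_j = \phi_i$ back in, the product rule yields
\[
\dot\phi_i + [\omega_i^{\text{s}_i}\times]\phi_i = R\left([\omega_j^{\text{s}_j}\times]\phi_j + \dot\phi_j\right).
\]
Both sides are again functions of the inputs and their derivatives alone, so this is a second correspondence $R u_2 = w_2$ with $w_2 = \dot\phi_i + [\omega_i^{\text{s}_i}\times]\phi_i$ and $u_2 = \dot\phi_j + [\omega_j^{\text{s}_j}\times]\phi_j$. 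The key technical identification is that $w_1$ and $w_2$ are exactly the sensor-$i$ expressions of the joint-centre specific force and its derivative: since $a^{\text{n}}_{\text{c}_{ij}} - g^{\text{n}} = R^{\text{n}\text{s}_i}\phi_i$ and $\dot R^{\text{n}\text{s}_i} = R^{\text{n}\text{s}_i}[\omega_i^{\text{s}_i}\times]$ by~\eqref{eq:dynModel_ori}, differentiating gives $\dot a^{\text{n}}_{\text{c}_{ij}} = R^{\text{n}\text{s}_i}w_2$, hence $w_1 = R^{\text{s}_i\text{n}}(a^{\text{n}}_{\text{c}_{ij}} - g^{\text{n}})$ and $w_2 = R^{\text{s}_i\text{n}}\dot a^{\text{n}}_{\text{c}_{ij}}$.

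Then I would invoke the Wahba uniqueness argument. Because $R \in \mathbb{SO}(3)$ preserves the cross product, $R(u_1\times u_2) = (Ru_1)\times(Ru_2) = w_1\times w_2$, so $R$ maps the triple $\{u_1, u_2, u_1\times u_2\}$ to $\{w_1, w_2, w_1\times w_2\}$. If $u_1$ and $u_2$ are linearly independent these are two bases, and a rotation sending one basis to another is unique; existence is guaranteed because the data stem from an actual motion and are thus consistent. Since $R$ is invertible, $\{u_1,u_2\}$ is independent iff $\{w_1,w_2\}$ is, and by the identities above $w_1, w_2$ differ from $a^{\text{n}}_{\text{c}_{ij}} - g^{\text{n}}$ and $\dot a^{\text{n}}_{\text{c}_{ij}}$ only by the orthogonal map $R^{\text{s}_i\text{n}}$. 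Therefore the two correspondences determine $R$ uniquely exactly when $a^{\text{n}}_{\text{c}_{ij}} - g^{\text{n}}$ and $\dot a^{\text{n}}_{\text{c}_{ij}}$ are linearly independent, which is the claimed condition.

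I expect the main obstacle to be the bookkeeping in the differentiation step: keeping track of which frame each quantity lives in, confirming the cancellation that isolates $w_2$, and verifying that both sides of each correspondence depend only on the known inputs, outputs, and their derivatives. The clean statement of Wahba uniqueness (existence from consistency of the data, uniqueness from the basis argument) is the other point requiring care; once it is in place, the transfer of the independence condition from sensor frame $s_i$ to the navigation frame follows immediately from the orthogonality of $R^{\text{s}_i\text{n}}$.
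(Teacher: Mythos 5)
Your proof is correct and follows essentially the same route as the paper's: both reduce the constraint~\eqref{eq:modelConstraint_acc_f} together with its first time derivative to a two-correspondence Wahba/Procrustes problem for the unknown rotation, and conclude uniqueness exactly when the joint-centre specific force $a^{\text{n}}_{\text{c}_{ij}} - g^{\text{n}}$ and its derivative $\dot a^{\text{n}}_{\text{c}_{ij}}$ are linearly independent. Your only deviations are cosmetic --- you express the two correspondences in the sensor frame $s_i$ (using~\eqref{eq:dynRelOri}) rather than in the navigation frame (using~\eqref{eq:dynModel_ori}) as the paper does, and you spell out the basis-mapping uniqueness argument and the frame-invariance of linear independence, which the paper leaves implicit in its citation of Wahba's problem.
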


\begin{proof}
The system of equations~\eqref{eq:acc_dotacc} with $N = 1$ can be rewritten as
\begin{equation}
f^{\text{n}}_{\text{c}_{ij},i} = R^{\text{n} \text{s}_j} f^{\text{s}_j}_{\text{c}_{ij},j}, \qquad
\dot{a}^{\text{n}}_{\text{c}_{ij},i} = R^{\text{n} \text{s}_j} \left( [\omega_j^{\text{s}_j} \times] f_{c_{ij},j}^{s_j} + \dot{f}_{c_{ij},j}^{s_j} \right),
\label{eq:acc_dotacc_wahba}%
\end{equation}
where we made use of the definition of the specific force from~\eqref{eq:measModel-acc}, the time derivative of the rotation matrix according to~\eqref{eq:dynModel_ori}, the fact that the gravity vector in navigation frame is constant, and assume that the orientation $R^{\text{n} \text{s}_i}$ is known, following Definition~\ref{def:relObs}. In~\eqref{eq:acc_dotacc_wahba} we can now recognise Wahba's problem~\citep{wahba:1965}, also known as the orthogonal Procrustes problem. Based on this, it follows that the orientation $R^{\text{n} \text{s}_j}$ can uniquely be determined from~\eqref{eq:acc_dotacc_wahba} if and only if $f^{\text{n}}_{\text{c}_{ij},i}$ and $\dot{a}^{\text{n}}_{\text{c}_{ij},i}$ are linearly independent.
\end{proof}

\begin{remark} In theory, it is possible that the relative orientation is not observable at time $t$ according to Theorem~\ref{th:obs} but that the mapping $\Phi_{u,N}$ becomes injective when including higher-order derivatives of the model constraint. This consideration leads to a series of Wahba's problems that need to be checked and to the conclusion that the system is observable if any of these Wahba's problems can be solved. 
\end{remark}

In Sections~\ref{sec:simulationResults} and~\ref{sec:experimentalResults} we will present the estimation accuracy of two state estimators for different types of motion for which the relative orientation is (un)observable according to Theorem~\ref{th:obs}. We will show that accurate orientation estimates can be obtained in case the motion is observable according to Definition~\ref{def:relObs} for almost all time instants and that relatively little motion is required for observability. In case the system is not observable at any time instant according to Definition~\ref{def:relObs}, integration of the inertial measurements in~\eqref{eq:dynModel} results in a drift of the relative pose of the body segments. However, as soon as the user performs non-singular input motions, the system becomes observable and the drift can be removed.

\begin{center}
\begin{figure}
\centering
\includegraphics[]{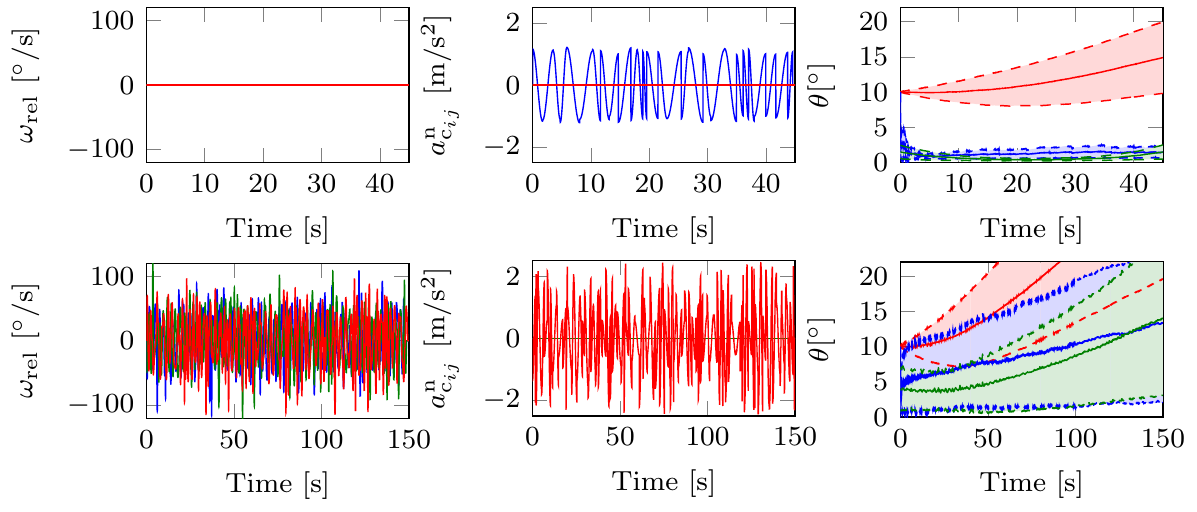}
\caption{Relative angular velocity $\omega_\text{rel}$ (left column) and acceleration of the joint centre (middle column) for the observable motion (top) and the unobservable motion (bottom) described in Section~\ref{sec:sim-obs} with the $x$-, $y$- and $z$-directions depicted in blue, green and red, respectively. Right column: Mean angular error $\theta$ and the one-standard-deviation intervals for the filtering algorithm (blue), the smoothing algorithm (green) and the integration of the gyroscope signal (red) for the observable motion (top) and the unobservable motion (bottom) described in Section~\ref{sec:sim-obs}.}
\label{fig:simResultsObs}
\end{figure}
\end{center}

\section{Simulation study}
\label{sec:simulationResults}
We illustrate the observability results from Section~\ref{sec:observability} using Monte Carlo simulations in which we simulate different types of motion. Without loss of generality we again focus on two connected segments and use the model from Section~\ref{sec:modelling_red}. To estimate the orientation of the two sensors, we use the methods presented in~\citep{weygersKVVVHC:2020}. That work presents both a filtering and a smoothing algorithm for the models~\eqref{eq:dynModel_ori} and~\eqref{eq:modelConstraint_acc} and was extensively validated on experimental data. The filtering algorithm is an extended Kalman filter, while the smoothing algorithm uses a Gauss-Newton optimisation. For more details we refer the interested reader to~\citep{weygersKVVVHC:2020} and~\citep{kokHS:2017}. 

We simulate inertial measurements at $100$Hz with noise and bias properties that have a similar order of magnitude as found in standard inertial sensors. The accelerometer noise is simulated to be zero-mean, white and Gaussian with a standard deviation of $0.05\frac{\mathrm{m}}{\mathrm{s}^2}$. The gyroscope noise is simulated to be zero-mean, white and Gaussian with a standard deviation of~$1\frac{^\circ}{\text{s}}$. Furthermore, for each Monte Carlo simulation, we draw constant accelerometer and gyroscope biases from the uniform distributions $\mathcal{U}\left(-0.05\frac{\mathrm{m}}{\mathrm{s}^2},\, 0.05\tfrac{\mathrm{m}}{\mathrm{s}^2} \right)$ and $\mathcal{U} \left(-0.2\frac{^\circ}{\mathrm{s}},\, 0.2\frac{^\circ}{\mathrm{s}} \right)$, respectively. Likewise, the distances from the sensors $i$ and $j$ to the joint are set to constant values drawn from uniform distributions, i.e. $r_{ij}^{\text{s}_i} = \begin{bmatrix} \mathcal{U}(0.01\mathrm{m}, 0.5 \mathrm{m}) & 0 & 0 \end{bmatrix}^\Transp$, $r_{ji}^{\text{s}_j} = - \begin{bmatrix} \mathcal{U} (0.01 \mathrm{m}, 0.5\mathrm{m}) & 0 & 0 \end{bmatrix}^\Transp$. We quantify the estimation error in terms of the smallest angle $\theta$ by which the estimated relative orientation $\hat{R}^{\text{s}_i \text{s}_j}$ must be rotated to become identical to the true relative orientation $R^{\text{s}_i \text{s}_j}$~\citep{Hartley:2013}. We initialise each simulation such that the initial relative orientation error is $10^\circ$. 

In Section~\ref{sec:sim-obs} we first exemplify the results from Theorem~\ref{th:obs} for both an observable and for an unobservable type of motion. Secondly, in Section~\ref{sec:excitation} we study the effect on the accuracy of the relative orientation for different amounts of excitation. Finally, we illustrate observable and unobservable motions in a specific application scenario. The results are also illustrated in the supplementary video available on \url{https://tinyurl.com/y4b2esjm}.

\subsection{Observable and unobservable motions}
\label{sec:sim-obs}
In this section we will illustrate the results from Theorem~\ref{th:obs} by simulating specific motions that are (not) observable according to the theorem and studying the estimated relative orientation using the algorithm from~\citep{weygersKVVVHC:2020}. 

First, we consider a motion that provides very little excitation of the joint system but is nevertheless non-singular in the aforementioned sense. Precisely, we assume that the specific force of the joint centre $a^\text{n}_{\text{c}_{ij}} - g^\text{n}$ has a non-zero horizontal component and that there is a non-zero change in joint centre acceleration $\dot{a}^\text{n}_{\text{c}_{ij}}$ which is not exactly in the direction of the vector $a^\text{n}_{\text{c}_{ij}} - g^\text{n}$. We simulate a specific case of this motion where the joint centre moves with non-constant acceleration for $45$ seconds along an axis that is perpendicular to the gravity direction and the two segments do not rotate around the joint centre. This motion is illustrated in terms of the relative angular velocity $\omega_\text{rel}$ and the acceleration of the joint centre $a^\text{n}_{\text{c}_{ij}}$ in Figure~\ref{fig:simResultsObs}. Note that, based on some non-specific demand for ``sufficient excitation'', one may intuitively suspect that this movement is unobservable since there is no relative motion between the two segments. However, for almost all time instants the change of acceleration is non-zero in a direction orthogonal to the acceleration due to gravity, i.e.\ the vectors $a^\text{n}_{\text{c}_{ij}} - g^\text{n}$ and $\dot{a}^\text{n}_{\text{c}_{ij}}$ are non-parallel to each other. Hence, according to Theorem~\ref{th:obs}, the relative orientation is observable according to Definition~\ref{def:relObs} for almost all time instants. The mean and standard deviation of the relative orientation estimates for the 100 Monte Carlo simulations are depicted in Figure~\ref{fig:simResultsObs}. As can be seen, the error in the relative orientation converges quickly and remains small for both the filtering and the smoothing algorithm. More specifically, after the first five seconds that are needed for the initial convergence, the mean errors of the filtering and smoothing algorithms are $1.31^\circ$ and $0.66^\circ$, respectively. The maximum errors are $4.36^\circ$ and $4.09^\circ$, respectively. For completeness, we also include the error from integration of the gyroscope only, which can be seen to grow over time up to a maximum of $26.94^\circ$.

Next, we consider motions that provide a lot of excitation but are nevertheless singular in the aforementioned sense. Specifically, we simulate data where for all time instants the joint centre acceleration changes only in the direction of the acceleration due to gravity, i.e.\ the vectors $a^\text{n}_{\text{c}_{ij}} - g^\text{n}$ and $\dot{a}^\text{n}_{\text{c}_{ij}}$ are parallel to each other. Hence, according to Theorem~\ref{th:obs}, the relative orientation is not observable according to Definition~\ref{def:relObs}. We simulate a specific case of this motion where the joint centre moves for $150$ seconds along the direction of the gravity vector with non-constant acceleration and the two segments rotate arbitrarily around the joint centre. This motion is again illustrated in Figure~\ref{fig:simResultsObs}. As can be seen, the excitation in the relative angular velocity as well as in the acceleration of the joint centre is significantly higher than in the previous example that we studied. Nevertheless, due to the unobservability of the motion, the estimation error in the relative orientation increases over time for both the filtering and the smoothing algorithm up to a maximum of $61.06^\circ$ and $45.40^\circ$, respectively, compared to a maximum error of $73.74^\circ$ for pure integration of the gyroscope signal. 

\subsection{Amount of excitation}
\label{sec:excitation}
\begin{center}
\begin{figure}
\centering
\includegraphics[]{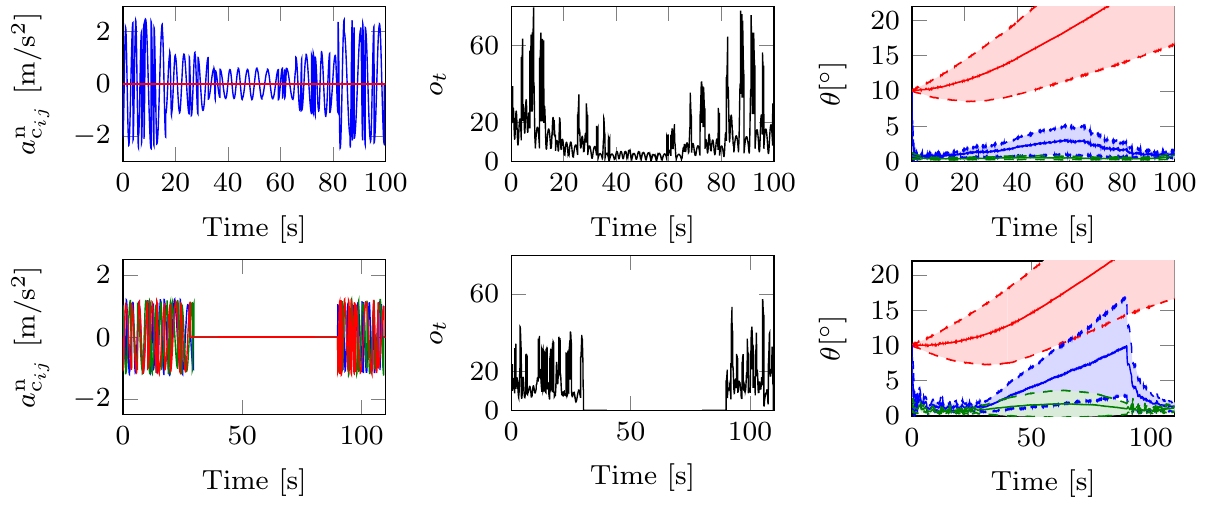}
\caption{Acceleration of the joint centre (left column), metric $o_t$ as defined in~\eqref{eq:obsIndex} (middle column) and mean angular error $\theta$ with one-standard-deviation intervals for the motion with varying amounts of excitation (top) and the temporarily stationary motion (bottom) described in Section~\ref{sec:excitation}. Left column: the $x$-, $y$- and $z$-directions are depicted in blue, green and red, respectively. Right column: The results for the filtering algorithm are depicted in blue, the results for the smoothing algorithm in green and the integration of the gyroscope signal in~red.}
\label{fig:excitationResults}
\end{figure}
\end{center}

We now study the effect of the amount of excitation on the quality of the estimates. To this end, we simulate the same motion as the observable motion in Section~\ref{sec:sim-obs} but vary the level of excitation. Hence, the relative movement of the sensors looks similar to that in the top row of Figure \ref{fig:simResultsObs}, but with the joint centre acceleration varying in excitation level as shown in Figure \ref{fig:excitationResults}. The angular error in the estimates of the filtering algorithm can be seen to increase up to a mean error of $3.02^\circ$ and a maximum error of $10.06^\circ$ during the time of low excitation. The angular errors in the estimates of the smoothing algorithm can be seen to barely be affected by this period of low excitation. To further visualise the amount of excitation and its effect on the observability of the relative pose, we compute for each time instant $t$ a metric $o_t$ that equals the norm of the cross product of $a^\text{n}_{\text{c}_{ij},t} - g^\text{n}$ and $\dot{a}^\text{n}_{\text{c}_{ij},t}$, averaged over the last 100 samples as

\begin{equation}
o_t = \tfrac{1}{100} \sum_{\tau=0}^{99} \| (a^\text{n}_{\text{c}_{ij},t-\tau} - g^\text{n}) \times \dot{a}^\text{n}_{\text{c}_{ij},t-\tau} \|_2,
\label{eq:obsIndex}
\end{equation}
where $\dot{a}^\text{n}_{\text{c}_{ij},t}$ is computed from the noise-free simulated inertial sensor data. Note that according to Theorem~\ref{th:obs}, the vectors $a^\text{n}_{\text{c}_{ij},t} - g^\text{n}$ and $\dot{a}^\text{n}_{\text{c}_{ij},t}$ need to be linearly independent for the relative orientation to be observable at time instant~$t$.

We also study the effect on the estimation accuracy of extended periods during which the metric $o_t$ is zero. To this end, we simulate $110$ seconds of data for which the joint centre arbitrarily changes its position in the reference frame with a non-constant acceleration for the first $30$ seconds and the last $20$ seconds of the simulation. During $60$ seconds in the middle of the data set, the segments are both assumed to be at rest. During these times, the relative orientation is not observable according to Definition~\ref{def:relObs}. This motion is illustrated in Figure~\ref{fig:excitationResults}. The error in the relative orientation estimated by the filtering algorithm can be seen to grow during the $60$ seconds of stationarity to a mean error of $5.47^\circ$ and a maximum error of $28.33^\circ$. In the smoothing algorithm, all measurements are used to compute the estimates. Hence, this algorithm is able to keep the errors low with a mean error of $1.44^\circ$ and a maximum error of $10.74^\circ$. This example illustrates that the filter is able to recover after temporary unobservability while estimation errors from the smoother are barely affected. Hence, for applications where longer time periods of unobservability can be expected, the use of a smoother or the use of a detector for these unobservable motions can be beneficial.

\subsection{Application to multilink aerial vehicles}
We now illustrate the role of observable and unobservable motions in the specific application of kinematically connected aerial vehicles, see~\citep{zhaoASCOI:2018} for a literature example of such a multilink robotic system. In our case, two drones perform flight manoeuvres while being rigidly connected by a ball-and-socket joint, which admits relative rotation, as illustrated in Figure~\ref{fig:quadcopters} and the supplementary video.\footnote{\url{https://tinyurl.com/y4b2esjm}} We simulate the measurement noises and biases from the same distributions as in the rest of this section, assume that the distances from the joint centres are $r_{ij}^{\text{s}_i} = \begin{bmatrix} 0.5 & 0 & 0 \end{bmatrix}^\Transp$, $r_{ji}^{\text{s}_j} = \begin{bmatrix} -0.5 & 0 & 0 \end{bmatrix}^\Transp$, and estimate the relative sensor orientations using the filtering and the smoothing method presented in~\citep{weygersKVVVHC:2020}. 

The quadcopters first take off by moving up vertically. During this motion, their relative orientation is unobservable, even when they move with respect to each other, rotate or vary their accelerations. Hence, the metric $o_t$ is zero, and the initial error of the filtering implementation does not converge. However, as soon as the joint centre starts moving horizontally, the relative orientation becomes observable, even in the absence of relative motion. We subsequently simulate a minute of data during which the quadcopters hover stationary. Hence, $o_t$ is zero and the estimation errors increase again, even for the smoothing implementation. Afterwards, when the quadcopters perform a sequence of observable motions, including horizontal displacements without any (relative) rotation, the estimation errors can be seen to converge again. Finally, during the vertical landing, the relative orientation is again unobservable. This simulation example illustrates that, in practice, long periods of unobservability or a lack of observability at the beginning of the data set can result in unreliable (real-time) estimates. However, observability can be assured by performing the motions that fulfil the proposed condition almost always. For instance, taking off or landing while simultaneously moving horizontally would yield the desired observability in the given application scenario.

\begin{figure}
\centering
\includegraphics[width=0.59\textwidth]{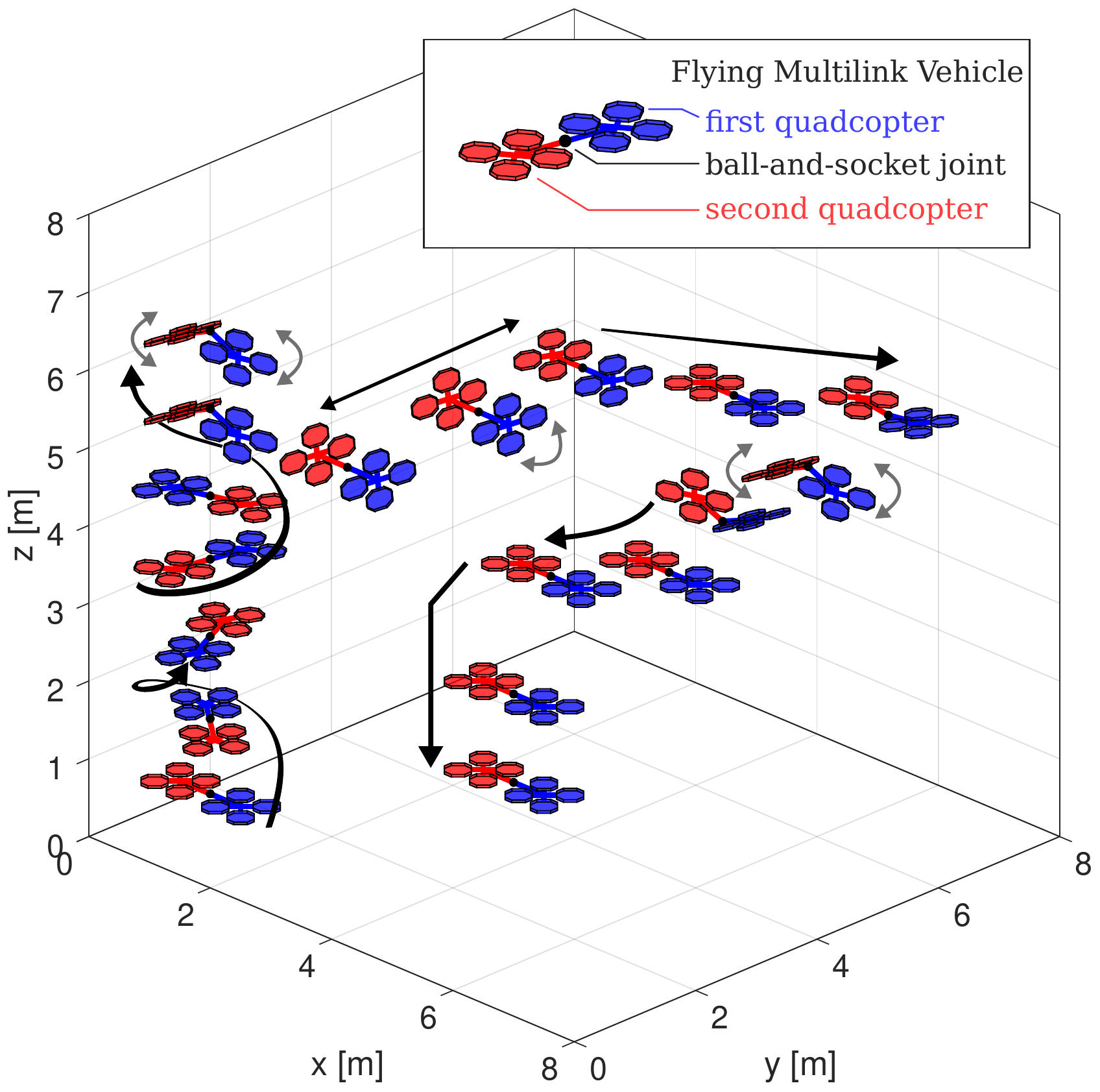} 
\includegraphics[]{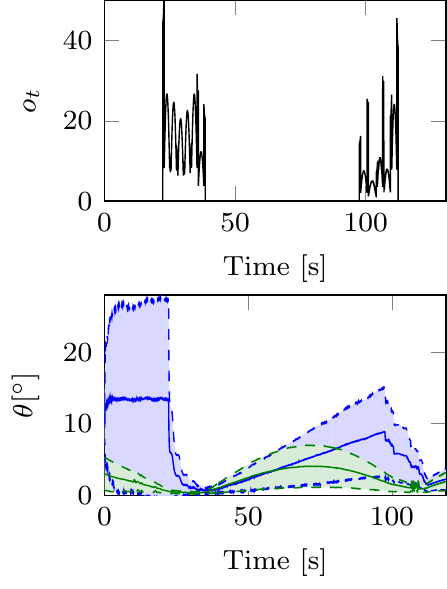} 
\caption{Left: Illustration of the simulated multilink aerial vehicle. The linked quadcopters first take off vertically and then move horizontally. Subsequently, the quadcopters remain stationary for 60 seconds, then move to different locations and finally land. Right: Metric $o_t$ and mean angular error $\theta$ with one-standard-deviation intervals. The results for the filtering algorithm are depicted in blue, the results for the smoothing algorithm in green.}
\label{fig:quadcopters}
\end{figure}
 
\section{Experiments}
\label{sec:experimentalResults}
\begin{center}
\begin{figure}
\begin{minipage}[c]{0.35\textwidth}
\centering
\vspace{-0.75cm}
    \includegraphics[width=1\columnwidth]{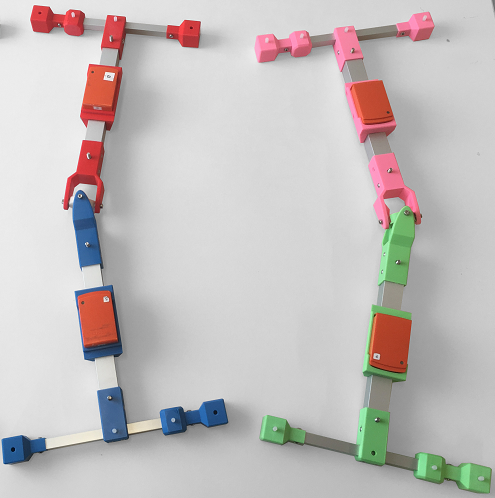}
\end{minipage}
\begin{minipage}[c]{0.65\textwidth}
\centering
\includegraphics[]{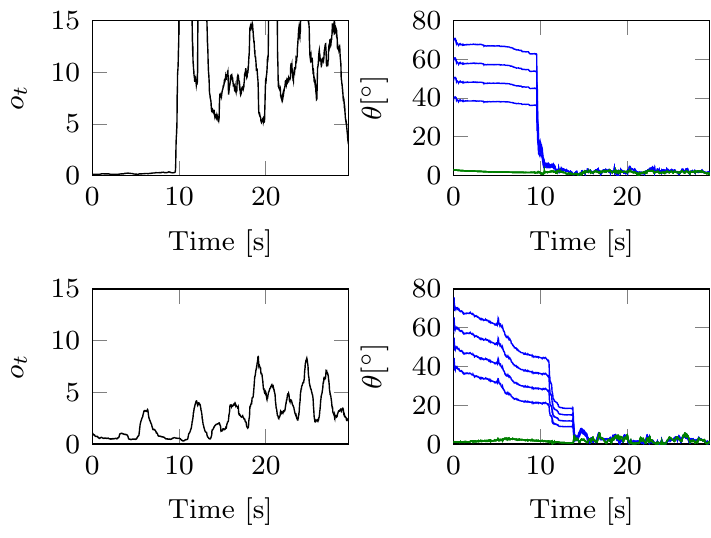}
\end{minipage}
\caption{Left: Experimental setup with 2D and 3D mechanical joints with inertial sensors attached to each segment. Right: Metric $o_t$ and the angular error $\theta$ for (top) a motion of the 3D joint where the system is first lying stationary on a table and subsequently moving around freely and for (bottom) a motion of the 2D joint where the system is held stationary but rotated twice into a different orientation, after which the system is moved around freely. The angular errors for four different initialisations are shown in blue for the filtering algorithm are depicted and in green for the smoothing algorithm.}
\label{fig:artJoints}
\end{figure}
\end{center}

\subsection{Mechanical joints}
\label{sec:mechJoints}
We experimentally validate our results with the 3D-printed mechanical joints shown in Figure~\ref{fig:artJoints}. Accelerometer and gyroscope measurements are collected using two attached IMUs (MTw Awinda, Xsens) sampled at 50 Hz. As a ground truth, marker trajectories from optical markers were simultaneously captured at 120 Hz. For different motions of the two joint types, we estimated the relative sensor orientations using the filtering and the smoothing method presented in~\citep{weygersKVVVHC:2020}. Estimated and reference relative orientations were aligned using Theorem~4.2 from~\citep{hol:2011}. The measurement noise covariances of the filtering and smoothing algorithms were chosen based on the standard deviation of the accelerometer measurements and assuming that the accelerometer measurement noise is the dominant source of error in~\eqref{eq:modelConstraint_acc}.

Firstly, we study the estimates for the 3D joint where, for the first 9.6 seconds, the system is lying stationary on a table, after which it is moved around freely, see also the supplementary video.\footnote{\url{https://tinyurl.com/y4b2esjm}} Hence, according to Theorem~\ref{th:obs}, the motion is initially unobservable, and the estimates from the filtering method from~\citep{weygersKVVVHC:2020} for four different initialisations indeed do not converge until the start of the observable motion, as shown in Figure~\ref{fig:artJoints}. The metric $o_t$ from~\eqref{eq:obsIndex} indeed shows that the motion is not observable for the first 9.6 seconds. The estimates of the smoothing algorithm can be seen to remain small for the entire data set for each of the four initialisations. To compute $o_t$, the time derivatives of the specific force $\dot{a}^\text{n}_{\text{c}_{ij},t}$ were approximated numerically based on inertial measurements of one of the two sensors and using a five-point stencil.

Secondly, we consider motion estimation for the 2D joint system, which is held mostly stationary for almost 14 seconds, but two different short rotations are performed at two distinct time instants within these 14 seconds. As can be seen in Figure~\ref{fig:artJoints}, the motion becomes observable only during these brief periods of rotation, during which the metric $o_t$ increases and the estimation error of the filtering method decreases. However, the period with observable motion is too brief for the estimation errors to decrease completely. After 14 seconds, the system is moved around freely and the estimation error can be seen to decrease further. Again, the estimates of the smoothing algorithm can be seen to remain small for the entire data set for each of the four initialisations.

\subsection{Biomechanical application}
Finally, we validate our results in a biomechanical application, where a subject consecutively walks for 198\,s in a comfortable self-selected pace and direction, sits down for 47\,s and walks again for 185\,s.
\begin{figure}
	\centering
	\includegraphics[width=0.8\columnwidth]{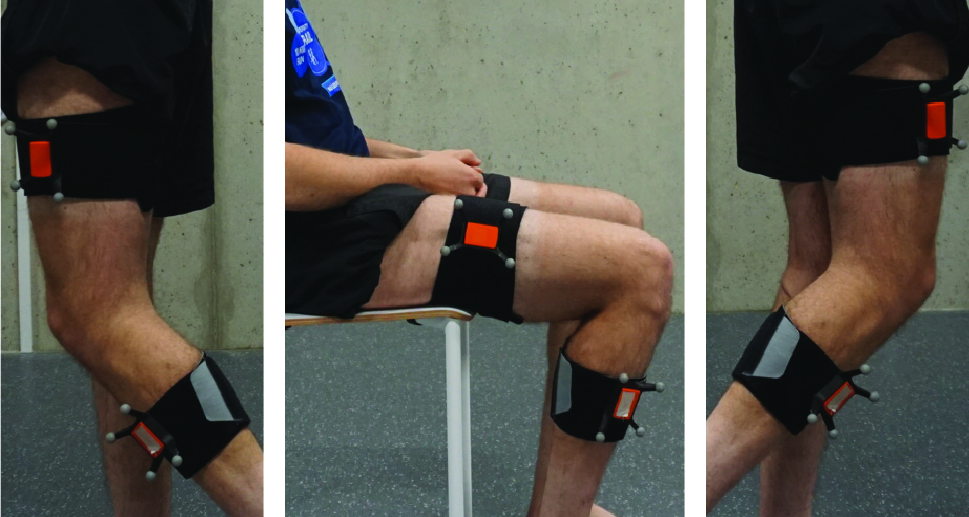}
	\caption{Experimental set-up with inertial sensors (and surrounding optical cluster markers) attached lateral and mid-distance on the thigh and shank segments. The subject executed a walk activity of 198\,s, followed by a sitting activity of 47\,s and another walk activity of 185\,s.}
	\label{fig:exp}
\end{figure}

During the experiment, accelerometer and gyroscope measurements from thigh- and shank-worn IMUs (MTw Awinda, Xsens) and marker trajectories from optical cluster markers (VICON, Vero, Vicon Motion Systems Ltd) were simultaneously captured (100Hz sampling rate) via a hardware time synchronisation. A picture of the experimental setup can be found in Figure~\ref{fig:exp}.
The study has been approved by the institutional research committee of KU Leuven (Clinical trial center UZ Leuven, Nr. S58936). All tests were done in accordance with the 1964 Helsinki declaration and its later amendments.
  
Relative sensor orientations were again estimated in a filtering and smoothing implementation following~\citep{weygersKVVVHC:2020}. Estimated and reference relative orientations were aligned using Theorem~4.2 from~\citep{hol:2011}. The resulting angular distance errors for both filtering and smoothing implementations can be found in Figure~\ref{fig:expResults}. We also show the metric $o_t$ as defined in~\eqref{eq:obsIndex}, computed similarly as in Section~\ref{sec:mechJoints}. The metric clearly shows that the relative orientation is not observable when the subject is sitting still. 
	
With respect to the knee joint, the results show that a walking movement at a self-selected comfortable pace yields sufficient excitation of the joint centre to make the relative orientation observable according to Definition~\ref{def:relObs}. The mean angular distance that the filter and smoother achieve are $5.60^\circ$ and $3.83^\circ$, respectively, for the part of the data where the person was walking. During the stationary part, the error for the smoothing implementation remained small with a maximum angular distance of $4.56^\circ$. The error of the filtering implementation grew to $19.59^\circ$ after 47 seconds, but quickly recovered at walking onset after the temporarily unobservable static time period. These observations confirm the theoretical results and are in line with the simulation results from Section~\ref{sec:excitation}.

\begin{center}
\begin{figure}
\centering
\includegraphics[]{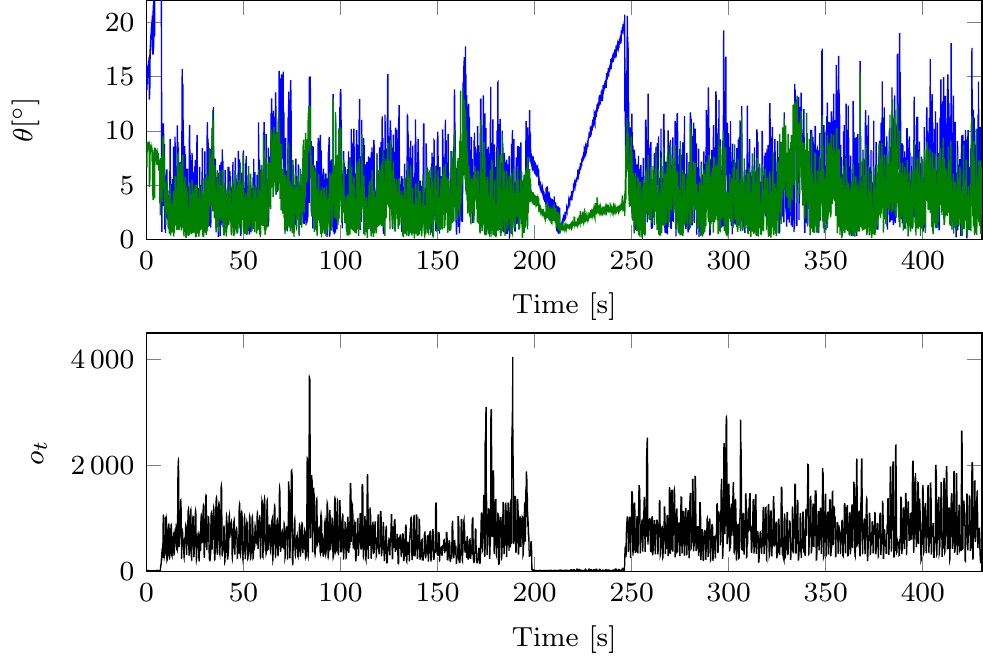}
\caption{Top: Angular error $\theta$ of the filtering algorithm (blue) and the smoothing algorithm (green) for the experimental data described in Section~\ref{sec:experimentalResults}. Bottom: The metric $o_t$, visualising the amount of excitation and its effect on the observability of the relative pose.}
\label{fig:expResults}
\end{figure}
\end{center}

\section{Conclusion}
In this work we have derived conditions on the observability of the relative pose of a kinematic chain, estimated using inertial sensors placed on adjacent segments. Any restrictive assumptions on the local magnetic field, the dynamics of the motions, or additional sensors were avoided to account for a large range of potential application scenarios. We have shown that the relative orientation is observable from purely inertial measurements whenever the specific force of the joint centre and its derivative are linearly independent. Simulations and experimental results confirmed the theoretical finding that excitation alone does not suffice to assure convergence but that accurate estimates are obtained in motions that fulfil the derived criterion. These results overcome the need to blindly hope for sufficient excitation in magnetometer-free inertial motion tracking, and they exemplify the value of systems and control theory for the design of safe and reliable sensor systems. They are expected to have an impact on a range of control applications that rely on nonrestrictive motion tracking of kinematic systems. In such applications, the derived observability condition can be used to provide crucial performance guarantees as well as to instruct users to perform motions that ensure observability. Future work could include exploiting these results in feedback-controlled systems as well as a more extensive study on the accuracy of the pose estimates as a function of the sensor noise levels and the cross product of the specific force of the joint centre and its derivative to provide bounds on the error of the estimated pose. 

\section*{Acknowledgments}
The conducted human experiments were funded by the European Regional Development Fund – We-lab for HTM~[grant number 1047]. The inertial, optical and video data of the 3D-printed mechanical joint motions has been recorded and preprocessed by Dustin Lehmann. We sincerely thank him for allowing us to use this data.

\bibliographystyle{unsrtnat}
\bibliography{references}

\end{document}